\title{Nearest-Neighbor Queries in Customizable Contraction Hierarchies and Applications}
\titlerunning{Nearest-Neighbor Queries in CCHs and Applications}
\author
    {Valentin Buchhold}
    {Karlsruhe Institute of Technology, Germany \and
        \url{https://i11www.iti.kit.edu/en/members/valentin_buchhold/index}}
    {buchhold@kit.edu}{}{}
\author
    {Dorothea Wagner}
    {Karlsruhe Institute of Technology, Germany \and
        \url{https://i11www.iti.kit.edu/en/members/dorothea_wagner/index}}
    {dorothea.wagner@kit.edu}{}{}
\authorrunning{V. Buchhold and D. Wagner}
\keywords{
  Nearest neighbors, points of interest, travel demand generation, radiation model,
  customizable contraction hierarchies}
\newcommand*\Dist{\mathit{dist}}
\newcommand*\SepDecomp{\mathcal{T}}
\newcommand*\Rank{\pi^{-1}}
\newcommand*\Key{\mathit{key}}
\newcommand*\RevLabel{d_\mathsf{r}}
\newcommand*\RevQueue{Q_\mathsf{r}}
\newcommand*\Otot{O_\text{tot}}
\newcommand*\Ofit{O_\text{fit}}
\newcommand*\Osel{O_\text{sel}}
\newcommand*\Oint{O_\text{int}}
\begin{document}

\maketitle

\begin{abstract}
  Customizable contraction hierarchies are one of the most popular route planning frameworks in practice, due to their simplicity and versatility. In this work, we present a novel algorithm for finding $k$-nearest neighbors in customizable contraction hierarchies by systematically exploring the associated separator decomposition tree. Compared to previous bucket-based approaches, our algorithm requires much less target-dependent preprocessing effort. Moreover, we use our novel approach in two concrete applications. The first application are \emph{online} $k$-closest point-of-interest queries, where the points of interest are only revealed at query time. We achieve query times of about 25~milliseconds on a continental road network, which is fast enough for interactive systems. The second application is travel demand generation. We show how to accelerate a recently introduced travel demand generator by a factor of more than 50 using our novel nearest-neighbor algorithm.
\end{abstract}

\section{Introduction}

Motivated by route planning in road networks, the last two decades have seen intense research on speedup techniques~\cite{BastDGMPSWW16} for Dijkstra's shortest-path algorithm~\cite{Dijkstra59}, which rely on a slow preprocessing phase to enable fast queries. Particularly relevant to real-world production systems are customizable speedup techniques, which split preprocessing into a metric-independent part, taking only the network structure into account, and a metric-dependent part (the \emph{customization}), incorporating edge weights (the \emph{metric}). A fast and lightweight customization is a key requirement for important features such as real-time traffic updates and personalized metrics. The most prominent customizable techniques are \emph{customizable route planning} (CRP)~\cite{DellingGPW17} and \emph{customizable contraction hierarchies} (CCHs)~\cite{DibbeltSW16}. Both achieve similar performance but with different trade-offs, and both are in use in industry.

Modern map-based services must support not only point-to-point queries but also many other types of queries. Over the years, both CRP and CCHs have been extended to numerous types of queries and problems. Efentakis and Pfoser~\cite{EfentakisP14} propose one-to-all and one-to-many algorithms within the CRP framework, and Efentakis et al.~\cite{EfentakisPV15} extend CRP to nearest-neighbor queries. Delling and Werneck~\cite{DellingW15} present alternative CRP-based algorithms for the one-to-many and nearest-neighbor problem. Baum et al.~\cite{BaumDPW13} extend CRP so that it can find energy-optimal paths for electric vehicles, and Kobitzsch et al.~\cite{KobitzschRS13} so that it can find multiple alternate routes from the source to the target.

Customizable contraction hierarchies and in particular \emph{contraction hierarchies} (CHs)~\cite{GeisbergerSSV12}, the predecessors of CCHs, have also received considerable attention; see \cite{BastDGMPSWW16} for a recent overview. Since each CCH \emph{is a} CH, all algorithms operating on CHs carry over to CCHs. Delling et al.~\cite{DellingGNW13} introduce PHAST, a one-to-all algorithm on CHs. RPHAST~\cite{DellingGW11} is an extension to the one-to-many problem. Alternatively, one-to-many queries on CHs can be solved using the \emph{bucket-based approach} by Knopp et al.~\cite{KnoppSSSW07}. Geisberger~\cite{Geisberger11} extends the bucket-based approach to the nearest-neighbor problem.

In this work, we introduce a novel algorithm for finding $k$-nearest neighbors in CCHs. The \emph{$k$-nearest neighbor problem} takes as input a graph~$G = (V, E)$, a source~$s \in V$, a nonempty set~$T \subseteq V$ of targets, and an integer~$k$ with $1 \le k \le |V|$. The goal is to find the $k$ targets~$t_i \in T$ closest to $s$, i.e., those that minimize $\Dist(s, t_i)$, where $\Dist(v, w)$ is the shortest-path distance from $v$ to $w$. Modern nearest-neighbor algorithms tailored to road networks work in up to four phases~\cite{DellingW15, DellingGW11, AbeywickramaCT16}. \emph{Preprocessing} takes as input only the network structure, \emph{customization} incorporates the metric into the preprocessed data, \emph{selection} (or \emph{target indexing}) incorporates the set of targets into the data, and \emph{queries} take a source and find the $k$ targets closest to the source. Our algorithm follows this standard four-phase setup.

Note that there is already a nearest-neighbor algorithm by Geisberger~\cite{Geisberger11} which operates on CHs. However, its relatively heavy selection phase makes it only suitable for \emph{offline} queries, where the set of targets is known in advance. This is the case for simple store locators of franchises. However, more common in interactive map-based services are \emph{online} queries, where the set of targets is only revealed at query time. An example is the computation of the closest businesses whose name contains a user-defined keyword. We are not aware of any CH-based algorithm that can solve such queries.

There is indeed an algorithm~\cite{DellingW15} for online nearest-neighbor queries within the CRP framework. As already mentioned, however, CRP and CCHs are on a par with each other and both used in industry with good reasons. For a production system based on the CCH framework, it is usually not desirable to simultaneously maintain a CRP setup to support nearest-neighbor queries. All types of queries should be solvable within the CCH framework.

\subparagraph*{Related Work.}

We start by briefly reviewing the CH- and CRP-based nearest-neighbor algorithms mentioned above. Contraction hierarchies (CHs)~\cite{GeisbergerSSV12} are a point-to-point route planning technique that is much faster than Dijkstra's algorithm (four orders of magnitude on continental networks). CHs replace systematic exploration of \emph{all} vertices in the network with two much smaller search spaces (forward and reverse) in directed acyclic graphs, in which each edge leads to a ``more important'' vertex.

The basic idea behind the bucket-based nearest-neighbor algorithm~\cite{Geisberger11} is to precompute and store the reverse CH search spaces of the targets during the selection phase. More precisely, if $v$ appears in the reverse search space from a target~$t$ with distance~$y$, then $(t, y)$ is stored in a \emph{bucket}~$B(v)$ associated with $v$. The bucket entries are sorted by nondecreasing distance. The query phase of the bucket-based nearest-neighbor algorithm computes the forward CH search space from the source~$s$. For each vertex~$v$ in the search space from $s$ with distance~$x$, we scan the bucket~$B(v)$. For each entry~$(t, y) \in B(v)$, we obtain an $s$--$t$ path of length~$x + y$. The algorithm maintains the $k$~closest targets seen so far and stops bucket scans when $x + y$ reaches the distance to the $k$-th closest target found so far.

Customizable route planning (CRP)~\cite{DellingGPW17} is a point-to-point route planning technique that splits preprocessing into a metric-independent part and a metric-dependent customization. Metric-independent preprocessing partitions the network into roughly balanced cells and creates \emph{shortcuts} between each pair of boundary vertices in the same cell. Customization assigns costs to the shortcuts by computing shortest paths within each cell. Queries run a modification of bidirectional search that uses the shortcuts to skip over cells that contain neither the source nor the target. For better performance, we use multiple levels of overlays.

The CRP-based nearest-neighbor algorithm~\cite{DellingW15} marks all cells that contain one or more targets during the selection phase. Queries run a modification of Dijkstra's algorithm that skips over unmarked cells and descends into marked cells. Since the search discovers targets in increasing order of distance, we can stop when the $k$-th target is reached.

Of course, there are also nearest-neighbor algorithms tailored to road networks that are based on neither CRP nor CHs. Arguably the simplest one is \emph{incremental network expansion} (INE)~\cite{PapadiasZMT03}, which runs Dijkstra's algorithm until the $k$-th target is reached. Another straightforward approach is \emph{incremental Euclidean restriction} (IER)~\cite{PapadiasZMT03}. The basic idea behind IER is to repeatedly retrieve the next closest target based on the straight-line distance (e.g., using an R-tree~\cite{Guttman84}) and compute the actual distance to it using any shortest-path algorithm as a black box. IER stops when the geometric distance to the next closest target exceeds the shortest-path distance to the $k$-th closest target so far encountered.

Since IER had only been evaluated using Dijkstra's algorithm, its performance was generally regarded as uncompetitive in practice. In particular, IER combined with Dijkstra cannot possibly be faster than INE. More recently, IER was combined with \emph{pruned highway labeling}~\cite{AkibaIKK14}, yielding one of the fastest nearest-neighbor algorithms in many cases~\cite{AbeywickramaCT16, AbeywickramaC17}.

More sophisticated nearest-neighbor algorithms are SILC~\cite{SametSA08, SankaranarayananAS05}, ROAD~\cite{LeeLZT12, LeeLZ09}, and G-tree~\cite{ZhongLTZG15, ZhongLTZ13}. Since previously published results had disagreed on the relative performance of these algorithms, Abeywickrama et al.~\cite{AbeywickramaCT16} carefully reimplemented and reevaluated them once more. While G-tree was faster than SILC and ROAD in most cases, the differences were relatively small. Delling and Werneck~\cite{DellingW15} compare the CRP-based nearest-neighbor algorithm to G-tree, claiming that CRP outperforms G-tree. To sum up, all algorithms have comparable performance, with selection and query times of the same order of magnitude. However, a big advantage of CRP (and also of our algorithm) compared to the other approaches is a fast and lightweight customization phase, enabling important features such as real-time traffic updates and personalized metrics.

\subparagraph*{Our Contribution.}

We introduce a novel algorithm for finding $k$-nearest neighbors that operates on CCHs. Our algorithm systematically explores the associated separator decomposition tree in a way similar to nearest-neighbor queries~\cite{FriedmanBF77} in kd-trees~\cite{Bentley75}. Its selection phase is orders of magnitude faster than the one of previous bucket-based approaches, which makes it a natural fit for online $k$-closest point-of-interest (POI) queries. On the road network of Western Europe, we achieve selection times of about 20~milliseconds and query times of a few milliseconds or less. This enables \emph{interactive} online queries, which need to run both the selection and query phase for each client's request. We are not aware of any other nearest-neighbor algorithm operating on CCHs that enables interactive online queries.

In addition to closest-POI queries, we also look at a second concrete application. We show how a slightly modified version of our nearest-neighbor algorithm can be used for travel demand generation (or mobility flow prediction). Here, the problem we consider is computing the number~$T_{vw}$ of trips between each pair~$(v, w)$ of vertices~$v, w$ in a road network. Depending on the expected length of the generated trips, we accelerate a recently introduced demand generator~\cite{BuchholdSW19a} by a factor of more than 50.

\subparagraph*{Outline.}

\Cref{sec:preliminaries} reviews the CCH framework. \Cref{sec:nearest-neighbors} describes our novel nearest-neighbor algorithm in detail. \Cref{sec:applications} continues with two concrete applications in which our algorithm can be used. \Cref{sec:experiments} presents an extensive experimental evaluation of various closest-POI algorithms and travel demand generators. \Cref{sec:conclusion} concludes with final remarks.

\section{Preliminaries}
\label{sec:preliminaries}

We treat a road network as a bidirected graph $G = (V, E)$ where vertices represent intersections and edges represent road segments. Each edge $(v, w)$ has a nonnegative length $\ell(v, w)$ that represents the travel time from $v$ to $w$. A one-way road segment from $v$ to $w$ can be modeled by setting $\ell(w, v) = \infty$. The shortest-path distance (i.e., travel time) from $v$ to $w$ in $G$ is denoted by $\Dist(v, w)$. For simplicity, we assume that $G$ is strongly connected.

\subparagraph*{Separator Decompositions.}

A \emph{separator decomposition}~\cite{BauerCRW16} of a strongly connected $n$-vertex bidirected graph~$G = (V, E)$ is a rooted tree~$\SepDecomp = (\mathcal{X}, \mathcal{E})$ whose nodes~$X \in \mathcal{X}$ are disjoint subsets of $V$ and that is recursively defined as follows. If $n = 1$, then $\SepDecomp$ consists of a single node~$X = V$. If $n > 1$, then $\SepDecomp$ consists of a root~$X \subseteq V$ that separates $G$ into multiple strongly connected subgraphs~$G_0, \dots, G_{d - 1}$. The children of $X$ are the roots of separator decompositions of $G_0, \dots, G_{d - 1}$. For clarity, an element~$v \in V$ is always called \emph{vertex} and an element~$X \in \mathcal{X}$ is always called \emph{node}. We denote by $\SepDecomp_X$ the subtree of $\SepDecomp$ rooted at $X$ and we denote by $G_X$ the subgraph of $G$ induced by the vertices contained in $\SepDecomp_X$. The vertex set of $G_X$ is represented by $V(G_X)$, and the edge set by $E(G_X)$.

In general, a separator~$X \in \mathcal{X}$ should be small, and the resulting subgraphs~$G_0, \dots, G_d$ should be balanced. Therefore, separator decompositions are typically obtained by recursive dissection (e.g., using Inertial Flow~\cite{SchildS15}, FlowCutter~\cite{HamannS18}, or InertialFlowCutter~\cite{GottesburenHUW19}). %Modern graph dissection algorithms tailored to road networks include Inertial Flow~\cite{SchildS15}, FlowCutter~\cite{HamannS18}, and InertialFlowCutter~\cite{GottesburenHUW19}.

\subparagraph*{Nested Dissection Orders.}

A separator decomposition~$\SepDecomp$ of $G$ induces a (not necessarily unique) \emph{nested dissection order}~$\pi$ on the vertices in $G$~\cite{George73}. To obtain one, we number the vertices in the order in which they are visited by a postorder tree walk of $\SepDecomp$, where the vertices in each node are visited in any order. Note that the resulting order~$\pi = \langle \pi_0, \dots, \pi_{d - 1}, \pi_d \rangle$ is split into $d + 1$ contiguous subsequences $\pi_i$, where $d$ is the number of children~$Y_j$ of the root~$X$ of $\SepDecomp$. The subsequences~$\pi_0, \dots, \pi_{d - 1}$ are nested dissection orders on $V(G_{Y_0}), \dots, V(G_{Y_{d - 1}})$, and $\pi_d$ is an arbitrary order on $X$. (In the presence of turn costs, picking $\pi_d$ carefully improves performance~\cite{BuchholdWZZ20}.) We denote by $\pi^{-1}(v)$ the \emph{rank} of $v$ in $\pi$.

\subparagraph*{Customizable Contraction Hierarchies.}

\emph{Customizable contraction hierarchies} (CCH)~\cite{DibbeltSW16} are a three-phase speedup technique to accelerate point-to-point shortest-path computations. The preprocessing phase computes a separator decomposition of $G$, determines an associated nested dissection order on the vertices in $G$, and \emph{contracts} them in this order. To contract a vertex $v$, it is temporarily removed, and \emph{shortcut} edges are added between its neighbors. The output of preprocessing is the input graph plus the shortcuts added during contraction. We call this graph $H$. We denote by $N_H^\uparrow(v)$ the set of neighbors of $v$ in $H$ ranked higher than $v$.

The customization phase computes the lengths of the edges in $H$ by processing them in bottom-up fashion. To process an edge $(u, w)$, it enumerates all triangles~$\{v, u, w\}$ in $H$ where $v$ has lower rank than $u$ and $w$, and checks whether the path~$\langle u, v, w \rangle$ improves the length of $(u, w)$. Alternatively, Buchhold et al.~\cite{BuchholdSW19b} enumerate all triangles~$\{u, w, v'\}$ in $H$ where $v'$ has higher rank than $u$ and $w$, and check whether the path~$\langle v', u, w \rangle$ improves the length of $(v', w)$, which accelerates customization by a factor of 2.

There are two query algorithms. First, one can run a bidirectional Dijkstra search on $H$ that only relaxes edges leading to vertices of higher ranks. Let a \emph{forward CCH search} be a Dijkstra search that relaxes only outgoing upward edges, and a \emph{reverse CCH search} one that relaxes only incoming downward edges. A \emph{CCH query} runs a forward CCH search from the source and a reverse CCH search from the target until the search frontiers meet.

In addition, there is a query algorithm based on the \emph{elimination tree} of $H$. The parent of a vertex~$v$ in the elimination tree is the lowest-ranked vertex in $N_H^\uparrow(v)$. Bauer et al.~\cite{BauerCRW16} prove that the ancestors of a vertex~$v$ in the elimination tree are exactly the set of vertices scanned by a Dijkstra-based CCH search from $v$. An elimination tree search from $v$ therefore scans all vertices in the CCH search space of $v$ in order of increasing rank by traversing the path in the elimination tree from $v$ to the root. Since elimination tree queries use no priority queues, they are usually faster than Dijkstra-based CCH queries.

\section{Our Nearest-Neighbor Algorithm}
\label{sec:nearest-neighbors}

\begin{algorithm2e}[tb]
  \caption{Recursive formulation of our nearest-neighbor algorithm. At the first call, the parameter~$X$ is the root of the separator decomposition tree.}
  \label{algo:nearest-neighbors}
  \Func{$\mathit{searchSepDecomp}(X)$}{
    \If{the recursion threshold is deceeded}{
      examine all targets $t \in T \cap V(G_X)$ in the subgraph $G_X$\;
      \Return\;
    }
    examine all targets $t \in T \cap X$ in the separator $X$\;
    $C \gets \emptyset$\;
    \ForEach{child $Y$ of $X$}{
      \If{$T \cap V(G_Y) \ne \emptyset$}{
        \eIf{$s \in V(G_Y)$}{
          $C \gets C \cup \{(Y, 0)\}$\;
        }{
          compute the distance $\Dist(s, Y)$ from $s$ to a closest vertex in $G_Y$\;
          $C \gets C \cup \{(Y, \Dist(s, Y))\}$\;
        }
      }
    }
    \ForEach{$(Y, \Dist(s, Y)) \in C$ in ascending order of $\Dist(s, Y)$}{
      \If{$\Dist(s, Y)$ is less than the distance to the $k$-th closest target seen so far}{
        $\mathit{searchSepDecomp}(Y)$\;
      }
    }
  }
\end{algorithm2e}

Our algorithm for finding nearest neighbors in CCHs is inspired by the algorithm of Friedman et al.~\cite{FriedmanBF77} for finding nearest neighbors in kd-trees~\cite{Bentley75}. (However, our description requires no knowledge of that algorithm.) During the search, we maintain the $k$ closest targets seen so far in a max-heap~$\hat{T}$ using their distances from the source as keys. Initially, $\hat{T} = \{\bot\}$ with $\Key(\bot) = \infty$. The basic idea is as follows: We systematically explore the separator decomposition tree, but visit only nodes~$X$ whose corresponding subgraph~$G_X$ contains vertices that are closer to the source than the $k$-th closest target found so far. For each visited node~$X$, we compute the shortest-path distance from the source to each target in the separator~$X$ (if any), and update $\hat{T}$ accordingly.

The precise algorithm is most easily formulated as a recursive procedure (see \cref{algo:nearest-neighbors}). It takes a node~$X$ in the separator decomposition tree as parameter. At the first call, $X$ is the root of the separator decomposition. The first step of the procedure is to \emph{examine} all targets~$t \in T \cap X$ in the separator~$X$. To examine a target~$t$, we compute the shortest-path distance~$\Dist(s, t)$ from $s$ to $t$ with a standard elimination tree search. If $\Dist(s, t)$ is less than the maximum key in $\hat{T}$, we insert $t$ into the heap. If $\hat{T}$ now contains $k + 1$ elements, we delete the maximum element from the heap and discard it.

Next, we loop over all children~$Y$ of $X$ in the separator decomposition tree. If the subgraph~$G_Y$ induced by the vertices in $\SepDecomp_Y$ contains any targets, we add a pair~$(Y, \Dist(s, Y))$ to a set~$C$. We denote by $\Dist(s, Y)$ the shortest-path distance from $s$ to a closest vertex in $G_Y$, i.e., $\Dist(s, Y) = \min_{v \in V(G_Y)} \Dist(s, v)$. If $G_Y$ contains the source, this distance is zero. Otherwise, we have to compute it, which we will discuss in the next sections.

Finally, we loop over all pairs~$(Y, \Dist(s, Y)) \in C$ in ascending order of distance from the source. If $\Dist(s, Y)$ is less than the distance to the $k$-th closest target seen so far, we recurse on $Y$. Otherwise, $\SepDecomp_Y$ cannot contain better solutions than those already known.

Note that when $G_X$ is large but contains only a few targets, it is less costly to loop over all these targets than to explore $\SepDecomp_X$ until the leaves are reached. Therefore, when the number of targets in $G_X$ drops below a certain threshold, we stop the recursion and examine all targets~$t \in T \cap V(G_X)$ in $G_X$ (we use a recursion threshold of 8 in our experiments, determined experimentally). The following sections work out the remaining details.

\subparagraph*{Accessing Vertices and Targets in Subgraphs.}

Given a node~$X$ in the separator decomposition tree, our algorithm requires easy access to the set of vertices and the set of targets in the subgraph~$G_X$ and in the separator~$X$. Accessing the set of vertices in $G_X$ and in $X$ is particularly easy. To improve cache efficiency, the vertices in a CCH are reordered according to the order of contraction. That is, the vertices are numbered in the order in which they are visited by a postorder tree walk of $\SepDecomp$, where the vertices in each node are visited in any order. Hence, for each $X \in \mathcal{X}$, the vertices in $G_X$ are numbered contiguously, with the vertices in $G_X \setminus X$ appearing before the vertices in $X$. To support easy access to the vertices in $G_X$ and in $X$, we only need to store three indices with each $X$: the vertex in $G_X$ with the smallest index, the vertex in $G_X$ with the largest index, and the vertex in $X$ with the smallest index.

The set~$T$ of targets is represented by a sorted array. To make the targets in subgraphs (or separators) easily accessible, we use an auxiliary array~$A$ of size $|V| + 1$. The element~$A[i]$, $0 \le i \le |V|$, stores the number of targets among the first $i$ vertices. Note that $A$ can be filled by a single sweep through $T$ and $A$. To access the targets in $G_X$ (or $X$), we first retrieve the index~$l$ of the first vertex and the index~$r$ of the last vertex in $G_X$ (or $X$), as discussed above. The number of targets in $G_X$ (or $X$) is then $A[r + 1] - A[l]$, and the actual targets are stored contiguously in $T[A[l]]$, \dots, $T[A[r + 1] - 1]$.

\subparagraph*{Computing Shortest Paths to Subgraphs.}

The most straightforward approach to compute the distance~$\Dist(s, X)$ from $s$ to a closest vertex in $G_X$ is a standard Dijkstra-based CCH query, where the reverse search is initialized with all vertices in $G_X$. Let $\RevLabel$ and $\RevQueue$ be the distance labels and the queue of the reverse search, respectively. To initialize the reverse search, we set $\RevLabel[v] = 0$ for each vertex~$v \in V(G_X)$, $\RevLabel[w] = \infty$ for each vertex~$w \in V \setminus V(G_X)$, and $\RevQueue = V(G_X)$. This yields a correct but inefficient algorithm. However, we can do better.

We define the \emph{boundary}~$B(X)$ of $G_X$ as the set of vertices in $V \setminus V(G_X)$ that are adjacent to a vertex in $G_X$, i.e., $B(X) = \{w \in V \setminus V(G_X): (v, w) \in E, v \in V(G_X)\}$. Note that the boundary of any $G_X$ is easily accessible without any additional preprocessing.

\begin{lemma}
  \label{thm:boundary-vertex-ranks}
  Let $u$ be any vertex in $G_X$. Then, $\Rank(b) > \Rank(u)$ for each $b \in B(X)$.
\end{lemma}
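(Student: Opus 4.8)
The plan is to exploit the recursive structure of the separator decomposition together with the way the nested dissection order numbers vertices. Recall that $G_X$ is the subgraph induced by all vertices stored in the subtree $\SepDecomp_X$, and that a boundary vertex $b \in B(X)$ lies outside $V(G_X)$ but is adjacent (in the original graph $G$, hence also in $H$) to some vertex $u \in V(G_X)$. I want to show every such $b$ is contracted later than every vertex of $G_X$, i.e.\ has strictly larger rank.

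First I would set up the induction on the structure of $\SepDecomp$. Consider the path in $\SepDecomp$ from the root down to $X$. Since $b \notin V(G_X)$ but $b$ is adjacent to a vertex inside $G_X$, and since $G$ is connected, $b$ must be ``separated off'' from $G_X$ by some separator on this root-to-$X$ path. Concretely, let $Z$ be the deepest ancestor of $X$ (possibly $X$ itself) such that $b \in V(G_Z)$; such a $Z$ exists because $b \in V(G_{\text{root}}) = V$. If $b$ were not in the separator of $Z$, then $b$ would lie in $V(G_W)$ for some child $W$ of $Z$; by the defining property of a separator decomposition, the children's subgraphs $G_0,\dots,G_{d-1}$ are the connected components of $G_Z$ after deleting the separator $Z$, so an edge from $b$ to a vertex of $G_X$ would force $G_X$ to live in the same child subgraph $G_W$ as $b$, contradicting the maximality of the depth of $Z$ (we could descend one more level). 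Hence $b$ lies in the separator of the node $Z$, which is a strict ancestor of $X$ or equals $X$; but $b \notin V(G_X)$ rules out $Z = X$, so $Z$ is a strict ancestor of $X$, and $b$ belongs to the separator set of $Z$.

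Now I would cash this in using the nested dissection order. By construction, $\pi$ visits the subtrees of a node before the separator of that node, and it visits entire subtrees contiguously. Since $u \in V(G_X)$ and $X$ lies strictly below $Z$ in $\SepDecomp$, the vertex $u$ is visited while the postorder walk is inside the subtree $\SepDecomp_Z$, \emph{before} the walk reaches the separator of $Z$; and $b$ lies in that separator. Therefore $u$ receives a smaller number than $b$, i.e.\ $\Rank(u) < \Rank(b)$. Since $u$ was an arbitrary vertex of $G_X$ adjacent to $b$, and in fact the same argument applies to \emph{every} vertex of $G_X$ (each is visited inside $\SepDecomp_Z$ before the separator of $Z$), we get $\Rank(b) > \Rank(v)$ for all $v \in V(G_X)$, in particular for the given $u$.

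The main obstacle is making the ``separated off'' step fully rigorous: I need the precise fact that the child subgraphs $G_0, \dots, G_{d-1}$ of a node $Z$ are exactly the connected components of $G_Z$ minus the separator $Z$ (which is how \cite{BauerCRW16} defines it), so that any edge crossing between two different children is impossible and any edge from $G_X$ to an external vertex must hit the separator of some ancestor. Once that structural fact is in hand, the rest is a routine consequence of the postorder numbering. A clean alternative phrasing avoids naming $Z$ explicitly: take any $b \in B(X)$ and any $u \in V(G_X)$, and simply observe that every common ancestor node of ``the leaf containing $b$'' and ``the leaf containing $u$'' whose separator does not contain $b$ must contain $u$'s subtree and $b$'s subtree in distinct children — impossible given the edge $(u,b)$ — so the least such ancestor has $b$ in its separator while $u$ sits in a proper subtree, and postorder then gives $\Rank(u) < \Rank(b)$.
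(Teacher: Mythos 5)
Your proof is correct and follows essentially the same route as the paper's: you show that the node containing $b$ must be (the separator of) a strict ancestor of $X$ on the path to the root---the paper does this via a lowest-common-ancestor contradiction, you via the deepest ancestor $Z$ with $b \in V(G_Z)$, which is the same separation argument---and then both conclude by the postorder numbering of the nested dissection order. No meaningful differences.
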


\begin{proof}
  Consider any vertex~$b \in B(X)$. Let $b$ be contained in the node~$Y \notin V(\SepDecomp_X)$. We claim that $Y$ lies on the path in $\SepDecomp$ from $X$ to the root~$R$. Assume otherwise, i.e., $Y$ does not lie on the $X$--$R$ path. Let $Z$ be the lowest common ancestor of $X$ and $Y$. Since $Z$ separates $G_X$ and $G_Y$ in $G_Z$, there is no edge in $G$ that connects $G_X$ and $G_Y$. This contradicts that $b$ is adjacent to a vertex in $G_X$. Thus, $Y$ lies on the $X$--$R$ path. Since the vertices are numbered in the order in which they are visited by a postorder tree walk of $\SepDecomp$, the vertices in $Y$ are assigned higher ranks than the ones in $X$. In particular, we have $\Rank(b) > \Rank(u)$.
\end{proof}

\begin{theorem}
  \label{thm:boundary-neighborhood}
  Let $u$ be the highest-ranked vertex in $G_X$. Then, $B(X) = N_H^\uparrow(u)$.
\end{theorem}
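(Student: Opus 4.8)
The plan is to reduce the statement to a fact about paths in $G$ by invoking the standard \emph{witness-path} characterization of the CCH graph: for a nested-dissection contraction hierarchy, $(v, w) \in E(H)$ holds if and only if $G$ contains a $v$--$w$ path all of whose interior vertices have rank strictly less than $\min(\Rank(v), \Rank(w))$ (see~\cite{BauerCRW16}). Granting this, both inclusions follow from \Cref{thm:boundary-vertex-ranks} together with the fact that $G_X$ is strongly connected, which holds by construction of the separator decomposition. Two preliminary observations set the stage: since $u$ is the highest-ranked vertex of $G_X$, it lies in the separator $X$, and any vertex of rank higher than $\Rank(u)$ is outside $V(G_X)$; and every $b \in B(X)$ satisfies $\Rank(b) > \Rank(u)$ by \Cref{thm:boundary-vertex-ranks}.

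For the inclusion $N_H^\uparrow(u) \subseteq B(X)$, I would take $w \in N_H^\uparrow(u)$, so $\Rank(w) > \Rank(u)$ and hence $w \notin V(G_X)$, and pick a witness path $u = x_0, x_1, \dots, x_k = w$ whose interior vertices all have rank less than $\min(\Rank(u), \Rank(w)) = \Rank(u)$. Since $x_0 \in V(G_X)$ but $x_k \notin V(G_X)$, there is a smallest index $j \ge 1$ with $x_j \notin V(G_X)$; then $x_{j-1} \in V(G_X)$ is adjacent to $x_j$, so $x_j \in B(X)$ and therefore $\Rank(x_j) > \Rank(u)$. An interior vertex would have rank less than $\Rank(u)$, so $x_j$ cannot be interior; hence $j = k$ and $w = x_k \in B(X)$.

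For the reverse inclusion $B(X) \subseteq N_H^\uparrow(u)$, I would take $b \in B(X)$ with a neighbor $v \in V(G_X)$, i.e., $(v, b) \in E$. If $v = u$, then $(u, b) \in E \subseteq E(H)$ and $\Rank(b) > \Rank(u)$, so $b \in N_H^\uparrow(u)$. Otherwise $\Rank(v) < \Rank(u)$, and since $G_X$ is strongly connected we may fix a \emph{simple} directed $u$--$v$ path inside $G_X$; its interior vertices lie in $V(G_X) \setminus \{u\}$ and thus have rank less than $\Rank(u)$. Appending the edge $(v, b)$ produces a $u$--$b$ path in $G$ whose interior vertices (the interior vertices of that simple path, together with $v$) all have rank less than $\Rank(u) = \min(\Rank(u), \Rank(b))$. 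By the witness-path characterization, $(u, b) \in E(H)$, and since $\Rank(b) > \Rank(u)$ we conclude $b \in N_H^\uparrow(u)$.

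The technical care needed is mostly bookkeeping: in the first direction one must not assume the witness path stays in $G_X$, only that it cannot leave $V(G_X)$ and return (this is what the minimal index $j$ and \Cref{thm:boundary-vertex-ranks} jointly enforce); in the second direction one must treat $v = u$ separately and insist on a \emph{simple} $u$--$v$ path so that $u$ does not reappear as a low-rank interior vertex. The only genuinely nontrivial ingredient is the witness-path characterization of $H$; a self-contained version can be obtained by induction over the contraction order (a shortcut $(v, w)$ is added precisely when a common lower-ranked neighbor is contracted, and each such neighbor is joined to $v$ and to $w$ by witness paths through already-contracted vertices). I expect that characterization, rather than the combinatorics above, to be the main obstacle if one wants the proof to stand alone; in either case \Cref{thm:boundary-vertex-ranks} is the real engine, since it is what forces every witness or expansion path to behave.
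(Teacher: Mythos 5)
Your proof is correct and follows essentially the same route as the paper's: both directions rest on \cref{thm:boundary-vertex-ranks} combined with the fact that an edge of $H$ exists precisely when $G$ contains a connecting path whose interior vertices are all lower-ranked (the paper invokes this informally via ``all $v_i$ are contracted before $u$ and $b$'' and the expansion of the shortcut $(u,w)$ through the separator $B(X)$, whereas you state it explicitly as the witness-path/fill-in characterization). Your added care about simple paths and the $v = u$ case tightens a step the paper glosses over, but the underlying argument is the same.
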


\begin{proof}
  Let $b$ be a vertex in $B(X)$. We claim that $b \in N_H^\uparrow(u)$. Since $G_X$ is by definition connected, there is a path~$\langle u, v_0, \dots, v_k, b \rangle$ in $G$ with $v_i \in G_X$. Since $\Rank(v_i) < \Rank(u)$ by definition and $\Rank(u) < \Rank(b)$ by \cref{thm:boundary-vertex-ranks}, all $v_i$ are contracted before $u$ and $b$. Therefore, CCH preprocessing adds a shortcut~$(u, b)$, and thus $b \in N_H^\uparrow(u)$.

  Conversely, let $w$ be a vertex in $N_H^\uparrow(u)$, i.e., there is an edge~$(u, w)$ in $H$. Since $u$ is the highest-ranked vertex in $G_X$ and $\Rank(u) < \Rank(w)$, we have $w \in V(G) \setminus V(G_X)$. We claim that $w \in B(X)$. Assume otherwise, i.e., $w \in V \setminus (V(G_X) \cup B(X))$. Since $B(X)$ separates $u$ and $w$ in $G$, the shortcut~$(u, w)$ corresponds to a path~$\langle u, \dots, b, \dots, w \rangle$ in $G$ with $b \in B(X)$. By construction, $b$ is contracted before $u$ and $w$. This contradicts \cref{thm:boundary-vertex-ranks}.
\end{proof}

If $s \in V(G_X)$, then $\Dist(s, X) = 0$. So, assume $s \notin V(G_X)$. Since $B(X)$ separates $s$ and $G_X$, and all edge lengths are nonnegative, there is a closest vertex~$v^*$ in $G_X$ such that there is a shortest $s$--$v^*$ path~$\langle s, \dots, b, v^* \rangle$, $b \in B(X)$. Note that $(b, v^*)$ is a shortest edge among all edges~$(b, v) \in E$, $v \in V(G_X)$; otherwise, $v^*$ would not be a closest vertex in $G_X$. Therefore, $\Dist(s, X) = \min_{b \in B(X)} (\Dist(s, b) + \min_{\{(b, v) \in E: v \in V(G_X)\}} \ell(b, v))$. That is, it suffices to initialize the reverse search of the query with all boundary vertices. More precisely, we set $\RevLabel[b] = \min_{\{(b, v) \in E: v \in V(G_X)\}} \ell(b, v)$ for each vertex~$b \in B(X)$, $\RevLabel[w] = \infty$ for each vertex~$w \in V \setminus B(X)$, and $\RevQueue = B(X)$. This yields a reasonable algorithm, but we can do even better by exploiting elimination tree queries, which are usually faster than the Dijkstra-based CCH queries we have used so far.

Recall that the CCH search space~$S(b)$ of a vertex~$b$ corresponds to the path in the elimination tree from $b$ to the root~$r$. An elimination tree search from $b$ therefore scans all vertices in $S(b)$ in order of increasing rank by traversing the $b$--$r$ path in the elimination tree. Given a set~$B$ of vertices, it is not clear how to enumerate all vertices in the union of the search spaces, since the union generally corresponds to a subtree rather than a path in the elimination tree. However, we can exploit the fact that in our case $B$ is the boundary of $G_X$.

\begin{theorem}
  \label{thm:search-spaces}
  Let $l$ be the lowest-ranked vertex in $B(X)$. Then, $S(l) = \bigcup_{b \in B(X)} S(b)$.
\end{theorem}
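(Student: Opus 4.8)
\subparagraph*{Proof plan.}

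The plan is to prove the two inclusions separately; the left-to-right one is immediate, and the right-to-left one carries all the content. Since $l \in B(X)$, the set $S(l)$ is one of the sets in the union, so $S(l) \subseteq \bigcup_{b \in B(X)} S(b)$.

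For the converse inclusion, I would first reduce it to the single statement $B(X) \subseteq S(l)$. Recall that $S(v)$ is the path in the elimination tree from $v$ to the root, i.e., the set of (inclusive) ancestors of $v$. Hence if $b \in S(l)$, then $b$ is an ancestor of $l$, so the $b$-to-root path is a terminal segment of the $l$-to-root path, which means $S(b) \subseteq S(l)$. Taking the union over all $b \in B(X)$ then gives $\bigcup_{b \in B(X)} S(b) \subseteq S(l)$, as desired.

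It remains to show $B(X) \subseteq S(l)$. Let $u$ be the highest-ranked vertex in $G_X$. By \cref{thm:boundary-neighborhood}, $B(X) = N_H^\uparrow(u)$, and since $l$ is by definition the lowest-ranked vertex in $B(X) = N_H^\uparrow(u)$, the vertex $l$ is exactly the parent of $u$ in the elimination tree; consequently $S(u) = \{u\} \cup S(l)$. Next I would invoke the fact that $N_H^\uparrow(u) \subseteq S(u)$: a Dijkstra-based CCH search started at $u$ scans $u$ and relaxes every upward edge out of $u$, so every vertex of $N_H^\uparrow(u)$ is reached by the search and therefore scanned, hence lies in $S(u)$. (Equivalently, this is the standard fact that every higher-ranked neighbor of a vertex is one of its elimination-tree ancestors; it can be shown by induction on decreasing rank, using that contracting $u$ inserts a shortcut between any two higher-ranked neighbors of $u$.) Finally, $u \in V(G_X)$ while $B(X) \subseteq V \setminus V(G_X)$, so $u \notin B(X)$, and therefore $B(X) = N_H^\uparrow(u) \subseteq S(u) \setminus \{u\} = S(l)$, which completes the argument.

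I expect the only genuine obstacle to be the auxiliary claim $N_H^\uparrow(u) \subseteq S(u)$; once that is available, everything else is bookkeeping with root paths in the elimination tree. The one place needing care is discarding $u$ when passing from $S(u)$ to $S(l)$, which is legitimate precisely because $u$ lies in $V(G_X)$ and hence not in $B(X)$.
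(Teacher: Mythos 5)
Your proof is correct, and it shares the overall skeleton of the paper's argument: the inclusion $S(l) \subseteq \bigcup_{b \in B(X)} S(b)$ is dismissed as trivial, the converse is reduced to showing $b \in S(l)$ for each $b \in B(X)$, and \cref{thm:boundary-neighborhood} is the key external ingredient identifying $B(X)$ with $N_H^\uparrow(u)$ for the highest-ranked vertex $u$ of $G_X$. Where you diverge is in the final step. The paper argues constructively about the hierarchy itself: since $l$ and $b$ are both upward neighbors of $u$ with $\Rank(u) < \Rank(l) < \Rank(b)$, contracting $u$ inserts a shortcut $(l, b)$, so $b \in N_H^\uparrow(l) \subseteq S(l)$. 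You instead stay entirely inside the elimination tree: $l$ is the lowest-ranked vertex of $N_H^\uparrow(u)$ and hence the parent of $u$, so $S(u) = \{u\} \cup S(l)$, and combining $N_H^\uparrow(u) \subseteq S(u)$ with $u \notin B(X)$ gives $B(X) \subseteq S(l)$. Both routes ultimately rest on the same basic fact that a vertex's upward neighbors lie in its search space; the paper applies it at $l$ after manufacturing the shortcut, while you apply it at $u$ and then peel off $u$ via the parent relation. Your version avoids reasoning about which shortcuts the contraction of $u$ creates and buys a slightly cleaner statement ($B(X) \subseteq S(l)$ in one shot rather than per-$b$); the paper's version is marginally more self-contained in that it does not need the parent characterization of the elimination tree, only the definition of $N_H^\uparrow$. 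Your care in excluding $u$ when passing from $S(u)$ to $S(l)$ is exactly the right place to be careful, and your justification ($u \in V(G_X)$, $B(X) \subseteq V \setminus V(G_X)$) is sound.
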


\begin{proof}
  Since $l \in B(X)$, we trivially have $S(l) \subseteq \bigcup_{b \in B(X)} S(b)$, so let $b \ne l$ be a vertex in $B(X)$. We claim that $S(b) \subseteq S(l)$. By \cref{thm:boundary-neighborhood}, the highest-ranked vertex~$u$ in $G_X$ is adjacent to both $l$ and $b$. Since $\Rank(u) < \Rank(l) < \Rank(b)$, CCH preprocessing adds a shortcut~$(l, b)$ when $u$ is contracted. Therefore, we have $b \in S(l)$ and thus $S(b) \subseteq S(l)$.
\end{proof}

By \cref{thm:search-spaces}, we can compute $\Dist(s, X)$ with a standard elimination tree query from $s$ to the lowest-ranked vertex in $B(X)$, where we initially set $\RevLabel[b] = \min_{\{(b, v) \in E: v \in V(G_X)\}} \ell(b, v)$ for each vertex $b \in B(X)$. Since a lower bound on $\Dist(s, X)$ suffices to preserve the correctness of our nearest-neighbor algorithm, we can also initialize the distance labels to zero. The resulting lower bound is only slightly worse than the exact distance, but initialization is somewhat faster. We observed the lowest running times when using lower bounds.

\subparagraph*{Accelerating Shortest-Path Searches.}

Note that the forward searches of all elimination tree queries done during the same nearest-neighbor query start at the same source. Unless we use special pruning criteria~\cite{BuchholdSW19b}, the forward searches compute identical distance labels. To further accelerate our nearest-neighbor algorithm, we run the forward search \emph{once} before the systematic exploration of the separator decomposition tree. Whenever we compute the distance to a target or subgraph, we run only the reverse search, which accesses the precomputed distance labels of the forward search.

After scanning a vertex $v$, a standard elimination tree search immediately initializes the distance label of $v$ to $\infty$, since it is not accessed anymore afterwards. We maintain this initialization approach for the reverse searches. The forward search, of course, must not immediately initialize the distance labels. Instead, after the exploration of the separator decomposition tree, we traverse the path in the elimination tree from the source to the root once again, and initialize the forward distance label of each visited vertex.

\section{Applications}
\label{sec:applications}

We continue with two substantially different applications in which our nearest-neighbor algorithm can be used. An obvious application are $k$-closest POI queries in map-based services. We can use our nearest-neighbor algorithm as is for this application, without further modifications. Afterwards, we look at a more abstract application (travel demand generation) where we make slight modifications to our algorithm.

\subsection{Online Closest-POI Queries}

Recall that modern closest-POI algorithms~\cite{DellingW15, DellingGW11, AbeywickramaCT16} work in up to four phases: preprocessing, customization, selection, and queries. We now divide the work our nearest-neighbor algorithm does into these standard phases. Note that our nearest-neighbor algorithm does nothing else but the standard CCH preprocessing and customization during the first two phases. To support easy access to the set of vertices in a subgraph or separator, we indeed need to associate three indices with each node~$X \in \mathcal{X}$ but an efficient representation of the separator decomposition already stores this information. Therefore, we reuse the standard CCH preprocessing and customization, without further modifications.

The selection phase runs POI-dependent preprocessing. The only preprocessed data that depends on the set~$P$ of POIs is the auxiliary array~$A$, which makes the POIs in a subgraph or separator easily accessible. As already mentioned, $A$ can be filled by a single sweep through $P$ and $A$. Finally, the query phase runs the systematic exploration of the separator decomposition tree (including the forward search immediately before the exploration and the initialization of the forward distance labels immediately after the exploration).

Note that our selection phase is lightweight and (as our experiments will show) orders of magnitude faster than the one of previous bucket-based approaches. This makes our nearest-neighbor algorithm a natural fit for \emph{online} $k$-closest POI queries, where the POIs are only revealed at query time. In this case, we need to run both the selection and query phase for each client's request. Except for simple store locators of franchises, online queries are more common than offline queries in interactive map-based services. For example, whenever the set of POIs is obtained from user-defined keywords, we face online queries.

\subsection{Travel Demand Generation}
\label{sec:travel-demand-generation}

A substantially different application in which our nearest-neighbor algorithm can be used is travel demand generation. Here, the problem we consider is computing the number~$T_{vw}$ of trips between each pair~$(v, w)$ of vertices~$v, w \in V$. This problem arises when we want to generate large-scale benchmark data for evaluating transportation algorithms, or when we want to predict mobility flows. This section shows how our nearest-neighbor algorithm can be used to accelerate a recently introduced travel demand generator~\cite{BuchholdSW19a}.

\subparagraph*{Radiation Model.}

The foundation for the aforementioned demand generator is the \emph{radiation model}~\cite{SiminiGMB12}. This model assumes that each vertex $v \in V$ has a nonnegative number $m_v$ of inhabitants and a nonnegative amount $n_v$ of opportunities. We denote by $M$ the total population in $G$ and by $N$ the total number of opportunities in $G$. The mobility flow out of each vertex is proportional to its population. Destination selection is based on the following main idea: Each traveler assigns to all opportunities a fitness or attractiveness value, drawn independently from a common distribution. Then, the traveler selects the closest opportunity with a fitness higher than the traveler's fitness threshold, drawn from the same distribution. The \emph{radiation model with selection}~\cite{SiminiMN13} decreases the probability of selecting an opportunity by a factor of $1 - \lambda$. Intuitively, increasing $\lambda$ increases the expected trip length. In the simplest version, the number of opportunities is approximated by the population, i.e., there are $M$ opportunities in a graph with a population of $M$.

\subparagraph*{Previous Implementations.}

There are two practical implementations~\cite{BuchholdSW19a} of the radiation model. \emph{DRAD} obtains high-quality solutions based on shortest-path distances and \emph{TRAD} obtains high performance but uses geometric distances. Both implementations generate one trip after another. First, they draw the origin~$O$ from a discrete distribution determined by the probability function~$\Pr[O = v] = m_v / M$. Second, they choose the number~$\Ofit$ of opportunities with a fitness higher than the traveler's fitness threshold uniformly at random in $0..N$. Third, they draw the number~$\Osel$ of selectable opportunities from a binomial distribution with $\Ofit$~trials and success probability~$1 - \lambda$. It remains to find the selectable opportunity closest to $O$, given the total number~$\Osel$ of selectable opportunities in $G$. This is realized differently by the two implementations.

DRAD draws the number~$\Oint$ of opportunities that are closer to $O$ than any selectable opportunity from a negative hypergeometric distribution determined by $\Osel$ and $N$, and runs Dijkstra's algorithm from $O$, stopping as soon as $\Oint + 1$~opportunities are visited. The last vertex scanned by the search is the destination of the current trip.

The basic idea of TRAD is to find the selectable opportunity closest to $O$ using a nearest-neighbor query~\cite{FriedmanBF77} in a kd-tree~\cite{Bentley75}. Each node in a kd-tree corresponds to a region of the plane. The region of the root is the whole plane and the leaves correspond to small disjoint blocks partitioning the plane. The query algorithm traverses the kd-tree, starting at the root, and maintaining the number~$\Osel(v)$ of selectable opportunities in the region corresponding to the current node~$v$. Let $\Otot(v)$ be the total number of opportunities in the region of $v$.

When the traversal reaches an interior node~$v$ in the kd-tree, the algorithm draws the number~$\Osel(l)$ of selectable opportunities in the region of the left child~$l$ from a hypergeometric distribution with $\Osel(v)$~draws without replacement from a population of size~$\Otot(v)$ containing $\Otot(l)$~successes. The number~$\Osel(r)$ of selectable opportunities in the region corresponding to the right child~$r$ is set to $\Osel(v) - \Osel(l)$. The algorithm then recurses on the child whose region is closer to $O$, and when control returns, it recurses on the other child. The search is pruned at any vertex~$v$ with $\Osel(v) = 0$, and at any vertex whose region is farther from $O$ than the closest selectable opportunity seen so far.

When the traversal reaches a leaf node~$v$, the algorithm samples $\Osel(v)$~selectable opportunities in the region corresponding to $v$. For each of these opportunities, the algorithm checks whether it improves the closest selectable opportunity seen so far.

\subparagraph*{Our Implementation.}

\begin{algorithm2e}[tb]
  \caption{Recursive procedure for finding the closest selectable opportunity in the subgraph $G_X$, given the number $\Osel(G_X)$ of selectable opportunities in $G_X$.}
  \label{algo:crad}
  \Func{$\mathit{findClosestSelectableOpportunity}(X, \Osel(G_X))$}{
    \If{the recursion threshold is deceeded}{
      sample $\Osel(G_X)$ selectable opportunities in the subgraph $G_X$\;
      \Return\;
    }
    $\langle \Osel(G_{Y_0}), \dots, \Osel(G_{Y_{d - 1}}), \Osel(X) \rangle \gets
        \mathit{multiHypergeomVariate}(
            \Osel(G_X), \langle \Otot(G_{Y_0}), \dots, \Otot(G_{Y_{d - 1}}), \Otot(X) \rangle)$\;
    sample $\Osel(X)$ selectable opportunities in the separator $X$\;
    $C \gets \emptyset$\;
    \ForEach{child $Y$ of $X$}{
      \If{$\Osel(G_Y) > 0$}{
        \eIf{$O \in V(G_Y)$}{
          $C \gets C \cup \{(Y, 0)\}$\;
        }{
          compute the distance $\Dist(O, Y)$ from $O$ to a closest vertex in $G_Y$\;
          $C \gets C \cup \{(Y, \Dist(O, Y))\}$\;
        }
      }
    }
    \ForEach{$(Y, \Dist(O, Y)) \in C$ in ascending order of $\Dist(O, Y)$}{
      \If{$\Dist(O, Y)$ is less than distance to currently closest select.\ opportunity}{
        $\mathit{findClosestSelectableOpportunity}(Y, \Osel(G_Y))$\;
      }
    }
  }
\end{algorithm2e}

We introduce a new implementation of the radiation model, which we call CRAD. Our implementation follows TRAD but uses nearest-neighbor queries in a customizable contraction hierarchy rather than in a kd-tree. In this way, we combine the efficient tree-based sampling approach from TRAD with shortest-path distances. As a result, our implementation obtains high-quality solutions like DRAD, but at much lower cost.

To use our nearest-neighbor algorithm in CRAD, we only need to make slight modifications to the procedure presented in \cref{sec:nearest-neighbors} (see \cref{algo:crad} for the modified procedure). In addition to a node~$X$ in the separator decomposition tree, it now takes the number~$\Osel(G_X)$ of selectable opportunities in $G_X$ as second parameter. At the first call, $X$ is the root of the separator decomposition and $\Osel(G_X)$ is the number~$\Osel$ of selectable opportunities in $G$, obtained as before in DRAD and TRAD. Let $Y_0, \dots, Y_{d - 1}$ be the children of $X$. As the first step, the procedure now distributes the $\Osel$~selectable opportunities in $G_X$ over the subgraphs~$G_{Y_0}, \dots, G_{Y_{d - 1}}$ and the separator~$X$. In contrast to TRAD where the opportunities are distributed among exactly two regions (left and right child), we now have $d + 1$~regions ($d$~children and the separator). Therefore, $\Osel(G_{Y_0}), \dots, \Osel(G_{Y_{d - 1}}), \Osel(X)$ obey a \emph{multivariate} hypergeometric distribution.

We can think of this distribution as drawing $\Osel(G_X)$~balls without replacement from an urn containing $\Otot(G_{Y_i})$~balls of type~$i$ for $i = 0, \dots, d - 1$ and $\Otot(X)$~balls of type~$d$. We obtain $\Osel(G_{Y_i})$~balls of type~$i$ for $i = 0, \dots, d - 1$ and $\Osel(X)$~balls of type~$d$.

After obtaining $\Osel(G_{Y_0}), \dots, \Osel(G_{Y_{d - 1}}), \Osel(X)$, we sample $\Osel(X)$~selectable opportunities in the separator~$X$, and check whether any of them improves the closest selectable opportunity seen so far. Next, we loop over all children~$Y$ of $X$ in the separator decomposition tree. If the subgraph~$G_Y$ contains any selectable opportunities, we add a pair~$(Y, \Dist(O, Y))$ to a set~$C$ (recall that $O$ is the origin vertex of the current trip). The shortest-path distance~$\Dist(O, Y)$ is computed as discussed in \cref{sec:nearest-neighbors}. Finally, we loop over all pairs~$(Y, \Dist(O, Y)) \in C$ in ascending order of distance from the origin. If $\Dist(O, Y)$ is less than the distance to the closest selectable opportunity seen so far, we recurse on $Y$.

\section{Experiments}
\label{sec:experiments}

This section presents a thorough experimental evaluation of both applications. First, we describe our experimental setup, including our benchmark machine, the inputs, and implementation details. Next, we evaluate various closest-POI algorithms, with a focus on their selection and query phases. Finally, we compare CRAD to DRAD and TRAD.

\subsection{Experimental Setup}

Our publicly available code\footnote{\url{https://github.com/vbuchhold/routing-framework}} is written in C++17 and compiled with the GNU compiler~9.3 using optimization level~3. We use 4-heaps~\cite{Johnson75} as priority queues. To ensure a correct implementation, we make extensive use of assertions (disabled during measurements). Our benchmark machine runs openSUSE Leap~15.2 (kernel~5.3.18), and has 192\,GiB of DDR4-2666 RAM and two Intel Xeon Gold~6144 CPUs, each with eight cores clocked at 3.50\,GHz and $8 \times 64$\,KiB of L1, $8 \times 1$\,MiB of L2, and 24.75\,MiB of shared L3~cache.

\subparagraph*{Inputs.}

Our benchmark instance is the road network of Western Europe. The network has a total of \num{18017748} vertices and \num{42560275} edges and was made available by PTV AG for the 9th DIMACS Implementation Challenge~\cite{DemetrescuGJ09}. For the evaluation of the travel demand generators, we use the population grid\footnote{\raggedright\url{https://ec.europa.eu/eurostat/web/gisco/geodata/reference-data/population-distribution-demography/geostat}} made available by Eurostat, the statistical office of the European Union. The grid has a resolution of one kilometer and covers all EU and EFTA member states, as well as the United Kingdom. We follow the approach in \cite{BuchholdSW19a} to assign the grid to the graph. For each inhabitant, we pick a vertex lying in their cell uniformly at random and assign the inhabitant to it. If there is no such vertex, we discard the inhabitant.

\subparagraph*{Implementation Details.}

We use the network dissection algorithm Inertial Flow~\cite{SchildS15} to compute separator decompositions and associated nested dissection orders, with the balance parameter~$b$ set to $3 / 10$. CCH customization uses perfect witness searches~\cite{DibbeltSW16}.

For comparison, we carefully reimplemented the bucket-based nearest-neighbor algorithm by Geisberger~\cite{Geisberger11}, which we call BCH. CH preprocessing is taken from the open-source library RoutingKit\footnote{\url{https://github.com/RoutingKit/RoutingKit}}. Both the forward and reverse CH searches use stall-on-demand~\cite{GeisbergerSSV12}.

The bucket-based nearest-neighbor algorithm can be used as is on CCHs, without further modifications. For better performance, however, we use a tailored version where we replace the Dijkstra-based CH searches used during selection and queries by elimination tree searches. Note that in contrast to CH searches, CCH searches are faster without the stall-on-demand technique. On the other hand, stall-on-demand decreases the bucket sizes. Therefore, we use stall-on-demand only for the reverse searches. We call this version BCCH.

To keep implementation complexity of the demand generators low, we use existing implementations of random variate generation algorithms. The Standard Template Library~(STL) offers the three distribution classes \texttt{uniform\_int\_distribution}, \texttt{binomial\_distribution}, and \texttt{geometric\_distribution}. The STL provides neither a hypergeometric nor a negative hypergeometric distribution. To generate hypergeometric variates, we use the stocc library\footnote{\url{https://www.agner.org/random/}}. Following \cite{BuchholdSW19a}, we approximate negative hypergeometric variates by geometric variates.

\subsection{Online Closest-POI Queries}

We start by comparing our nearest-neighbor algorithm (simply called CCH in this section) to Dijkstra's algorithm, BCH, BCCH, and CRP. Note that the performance of closest-POI algorithms is affected not only by the number of POIs but also by their \emph{distribution}. For example, the set of \emph{all} restaurants may be distributed evenly over the whole network, whereas a certain franchise may operate in a local region. To model this, we follow the methodology used by Delling et al.~\cite{DellingGW11} to evaluate one-to-many algorithms.

To obtain our problem instances, we first pick a center $c$ uniformly at random. We then use Dijkstra to grow a ball $B$ of size $|B|$ centered at $c$. Finally, we pick a POI set $P$ of size $|P|$ from $B$. By varying the parameters $|B|$ and $|P|$, we can model the aforementioned situations. For each combination, we generate 100 POI sets. Each POI set is evaluated with 100 sources picked at random. That is, each data point is an average over \num{10000} queries.

\subparagraph*{Main Results.}

\begin{table}[tb]
  \caption{Performance of different closest-POI algorithms for various POI distributions. For each distribution, we report the time to index a set of POIs (selection time), the space consumed by the index (selection space), and the time to find the $k = 1, 4, 8$ closest POIs (query time). For CRP, we take the figures for the online version from the original publication~\cite{DellingW15}.}
  \label{tab:closest-poi-algorithms}
  \begin{tabular}{lrrSSSrSrrr}
  \toprule
  & \multicolumn{5}{c}{$|P| = 2^{12}, |B| = 2^{20}$} & \multicolumn{5}{c}{$|P| = 2^{14}, |B| = |V|$} \\
  \cmidrule(lr){2-6}\cmidrule(l){7-11}
  & \multicolumn{2}{c}{selection} & \multicolumn{3}{c}{query time [\si{\micro\second}]} & \multicolumn{2}{c}{selection} & \multicolumn{3}{c}{query time [\si{\micro\second}]} \\
  \cmidrule(lr){2-3}\cmidrule(lr){4-6}\cmidrule(lr){7-8}\cmidrule(l){9-11}
  & space & time & \multicolumn{3}{c}{POIs to be reported} & space & time & \multicolumn{3}{c}{POIs to be reported} \\
  algo & [MiB] & [ms] & {$k = 1$} & {$k = 4$} & {$k = 8$} & [MiB] & {[ms]} & $k = 1$ & $k = 4$ & $k = 8$ \\
  \midrule
  Dij  &   -- &  -- & 846210 & 855438 & 873716 &    -- & {--} & 113.4 & 439.3 & 883.7 \\
  BCH  & 72.4 & 134 &     20 &     20 &     21 &  83.6 &  481 &   5.0 &   8.5 &  10.7 \\
  BCCH & 85.5 & 453 &     51 &     52 &     53 & 134.9 & 1753 &   6.0 &   8.8 &  11.1 \\
  CCH  & 68.7 &  21 &   2353 &   3501 &   4629 &  68.7 &   23 & 306.7 & 494.8 & 702.0 \\
  CRP  &   -- &  -- &   {--} &   {--} &   {--} &   0.0 &    8 &    -- & 640.0 &    -- \\
  \bottomrule
\end{tabular}
\end{table}

\Cref{tab:closest-poi-algorithms} shows the performance of different closest-POI algorithms for two POI distributions. We observe that Dijkstra's algorithm has reasonable performance when the POIs are evenly distributed over the whole graph ($|B| = |V|$). In this case, any potential source is relatively close to some POI, and thus the Dijkstra search can always stop early. However, Dijkstra's algorithm is not robust to the POI distribution. When $|B| = 2^{20}$, many potential sources are relatively far from any POI, and the average running times are around one second, too slow for interactive map-based services.

BCH achieves the best (offline) query times for both POI distributions. Note, however, that BCH is no competitor to BCCH, CCH, and CRP, since it operates on \emph{standard} contraction hierarchies, which cannot handle frequent metric updates. We only include BCH in our experiments for comparison with BCCH, since the bucket-based nearest-neighbor algorithm has not been tested on \emph{customizable} contraction hierarchies so far.

Although we tailored the bucket-based algorithm to CCHs, BCCH is still somewhat slower than BCH. This is expected, since CCHs contain more shortcuts and are thus denser than CHs. The slowdown is a factor of about 3.5 for selection. When $|B| = |V|$, BCCH has only slightly higher (offline) query times than BCH, since the queries relax only a few edges. However, BCCH queries are roughly 2.5 times slower than BCH queries when $|B| = 2^{20}$.

We observe that our nearest-neighbor algorithm (simply called CCH in this section) has considerably higher offline query times than BCCH. On the other hand, CCH achieves much faster selection times. For example, when $|P| = 2^{14}$, offline CCH queries are slower by a factor of 51--63 but CCH selection is faster by a factor of 77. Note that although CCH queries are significantly slower than BCCH queries, they are still slightly faster than CRP queries.

Online queries need to run both the selection and query phase for each client's request. Therefore, the time taken by an online query is the sum of the selection and query time. We observe that BCCH is not suitable for online queries. When $|P| = 2^{12}$, BCCH takes half a second to answer an online query, and it takes even 1.8~seconds when $|P| = 2^{14}$. In contrast, CCH takes only about 25~milliseconds for an online query.

\Cref{tab:closest-poi-algorithms} includes various alternative closest-POI algorithms. In addition, it seems natural to adapt existing one-to-many algorithms to the closest-POI problem. Promising candidates that are not based on buckets are CTD \cite{EisnerFHSS11, DellingGW11} and RPHAST \cite{DellingGW11}. However, since CTD and RPHAST selection take more than 100~milliseconds when $|B| = |V|$, online closest-POI queries based on CTD or RPHAST would be at least four times slower than ours.

\subparagraph*{Impact of the POI Distribution.}

Our next experiment considers the impact of the ball size on the performance of the different closest-POI algorithms. \Cref{fig:varying-ball-size} plots selection and (offline) query times for various ball sizes. We omit online query times for clarity. Since the online query times are dominated by the selection times, online query times would closely follow the selection curves. Except for Dijkstra's algorithm, all selection and query times are very robust to the ball size. While all query algorithms benefit from an even distribution of the POIs (for the aforementioned reasons), this effect is most pronounced for Dijkstra.

\begin{figure}[tb]
  \input{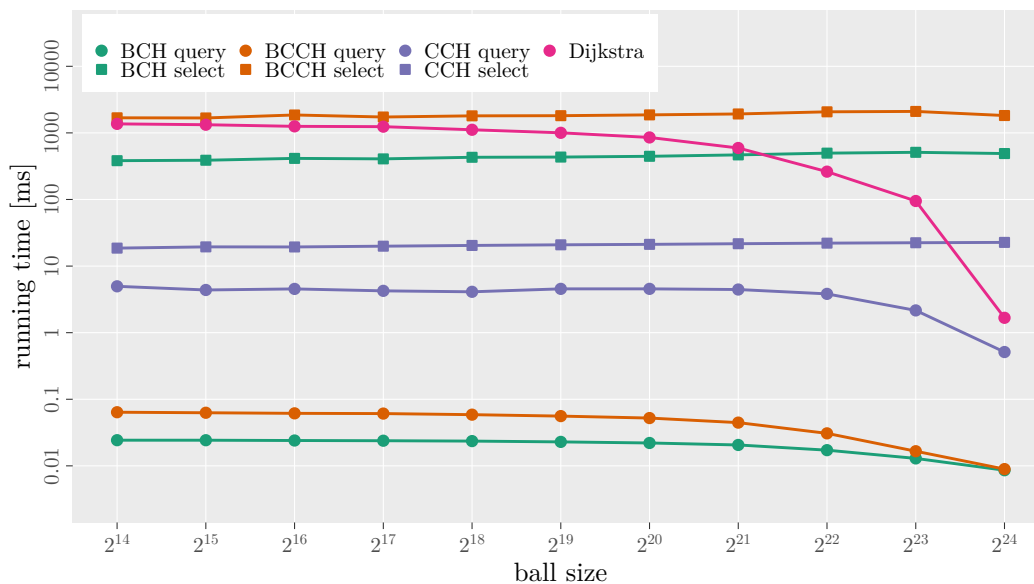}%
  \caption{Selection and query times of various closest-POI algorithms with $|P| = 2^{14}$ POIs picked at random from a ball of varying size $|B|$. Queries find the $k = 4$ closest POIs.}
  \label{fig:varying-ball-size}
\end{figure}

\subparagraph*{Impact of the Number of POIs.}

Next, we evaluate the impact of the number of POIs on the performance of Dijkstra's algorithm, BCH, BCCH, and CCH. \Cref{fig:varying-poi-set-size} plots selection and (offline) query times for various numbers of POIs. As before, online query times would closely follow the selection curves. We observe that the CCH selection time is independent of the number of POIs, whereas the BCCH selection time grows linearly. For $|P| = 2^{14}$, CCH selection is 76~times faster than BCCH selection. The speedup increases to more than three orders of magnitude for $|P| = 2^{18}$, the largest number of POIs tested in our experiment.

Once again, queries tend to become faster as $|P|$ gets larger, since they can stop (in the case of Dijkstra-based searches) or prune (in the case of elimination tree searches) earlier. The exception are CCH queries, which become slower initially. The reason is that for very small values of $|P|$, we do not explore the separator decomposition tree but trigger the base case at the root (which simply finds $|P|$ point-to-point shortest paths by running standard elimination tree queries from the source to each POI).

\begin{figure}[tb]
  \input{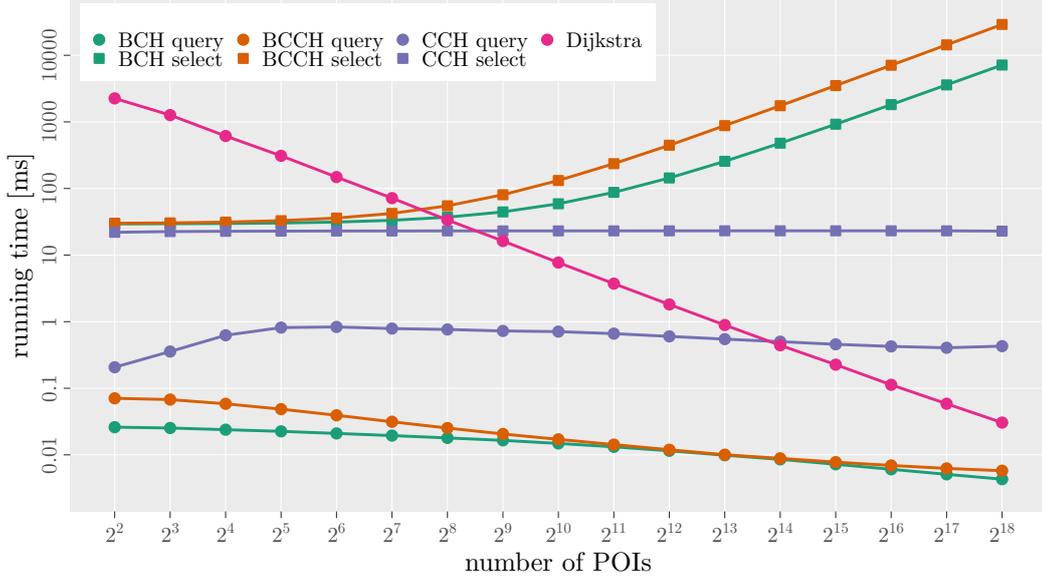}%
  \caption{Selection and query times of various closest-POI algorithms with a varying number $|P|$ of POIs picked at random from a ball of size $|B| = |V|$. Queries find the $k = 4$ closest POIs.}
  \label{fig:varying-poi-set-size}
\end{figure}

\subsection{Travel Demand Generation}

Next, we evaluate CRAD, including a comparison to DRAD and TRAD. Since CRAD uses shortest-path distances rather than geometric distances, it obtains high-quality solutions like DRAD. We verified this experimentally by rerunning the experiments in the original publication~\cite{BuchholdSW19a} for CRAD, using the same instances and methodology. We refer to the original paper for a comparison of the solution quality with shortest-path and geometric distances.

In this work, we focus on the performance of the three implementations. Since DRAD is at its heart a Dijkstra search from the trip's origin to its destination, the performance depends heavily on the expected length of the generated trip (which is controlled by the parameter $\lambda$; see \cref{sec:travel-demand-generation}). In contrast, TRAD and CRAD are robust to the trip length.

\Cref{fig:travel-demand} plots the time to generate a single trip for various values of $\lambda$. Note that a value of $\lambda = 1 - 10^{-4} / 1 = 0.9999$ leads on our instance to an average trip length of 9~minutes, and a value of $\lambda = 1 - 10^{-4} / 100 = 0.999999$ to an average trip length of 72~minutes. Between two data points, the average trip length increases by about 7~minutes. All data points are averages over \num{100000} trip generation executions.

We observe that CRAD outperforms DRAD for each value of $\lambda$ tested. Since TRAD resorts to geometric distances, it still is faster than CRAD by a factor of 28--74. As it obtains worse solutions, however, TRAD is no competitor to CRAD. For an average trip length of about 23~minutes, CRAD gains an order of magnitude over DRAD, and for the largest value of $\lambda$ tested in our experiment, we see a speedup of 59. Note that this increase in speed is quite useful in practice. While travel demand generation does not need to run in real time, its performance should remain reasonable. However, DRAD takes about 7~hours to generate one million one-hour trips. In contrast, CRAD takes less than 10~minutes.

\begin{figure}[tb]
  \centering
  \input{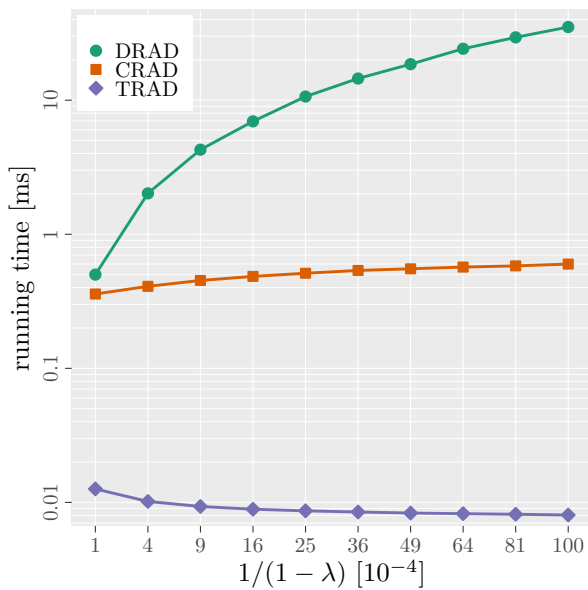}%
  \caption{Time to generate a single trip with different demand generators for various values of $\lambda$.}
  \label{fig:travel-demand}
\end{figure}

\section{Conclusion}
\label{sec:conclusion}

We presented a novel $k$-nearest neighbor algorithm that operates on CCHs. With selection times of about 20 milliseconds and query times of a few milliseconds or less, it is the first nearest-neighbor algorithm operating on CCHs that is fast enough for interactive online queries. Interestingly, our algorithm achieves similar performance as the online nearest-neighbor queries by Delling and Werneck~\cite{DellingW15} within the CRP framework. This confirms that CCHs and CRP are on an equal level and solve many types of problems equally well.

Moreover, we used our nearest-neighbor algorithm to significantly accelerate a recent travel demand generator. We proposed CRAD, a new implementation of the radiation model that combines the advantages of the two previous implementations DRAD and TRAD. CRAD obtains high-quality (shortest-path based) solutions like DRAD, but follows a more efficient tree-based sampling approach like TRAD.

Future work includes accelerating our nearest-neighbor algorithm even further. Note that we compute distances to subgraphs corresponding to the topmost nodes in the separator decomposition more often than distances to subgraphs corresponding to leaves. It would be interesting to see if it pays to precompute the reverse search spaces of the topmost subgraphs. Another possible approach would be to keep frequently used reverse search spaces in an LRU cache. Another interesting project is a parallel version of our algorithm that uses for example task-based parallelism to explore the separator decomposition tree. Finally, it would be interesting to port other point-of-interest algorithms to CCHs, for example best-via queries.

\bibliography{References}

\begin{thebibliography}{10}

\bibitem{AbeywickramaC17}
Tenindra Abeywickrama and Muhammad~Aamir Cheema.
\newblock Efficient landmark-based candidate generation for {kNN} queries on
  road networks.
\newblock In K.~Selçuk Candan, Lei Chen, Torben~Bach Pedersen, Lijun Chang,
  and Wen Hua, editors, {\em Proceedings of the 22nd International Conference
  on Database Systems for Advanced Applications ({DASFAA}'17)}, volume 10178 of
  {\em Lecture Notes in Computer Science}, pages 425--440. Springer, 2017.
\newblock \href {https://doi.org/10.1007/978-3-319-55699-4_26}
  {\path{doi:10.1007/978-3-319-55699-4_26}}.

\bibitem{AbeywickramaCT16}
Tenindra Abeywickrama, Muhammad~Aamir Cheema, and David Taniar.
\newblock K-nearest neighbors on road networks: A journey in experimentation
  and in-memory implementation.
\newblock {\em Proceedings of the VLDB Endowment}, 9(6):492--503, 2016.
\newblock \href {https://doi.org/10.14778/2904121.2904125}
  {\path{doi:10.14778/2904121.2904125}}.

\bibitem{AkibaIKK14}
Takuya Akiba, Yoichi Iwata, Ken ichi Kawarabayashi, and Yuki Kawata.
\newblock Fast shortest-path distance queries on road networks by pruned
  highway labeling.
\newblock In {\em Proceedings of the 16th Meeting on Algorithm Engineering and
  Experiments ({ALENEX}'14)}, pages 147--154. {SIAM}, 2014.
\newblock \href {https://doi.org/10.1137/1.9781611973198.14}
  {\path{doi:10.1137/1.9781611973198.14}}.

\bibitem{BastDGMPSWW16}
Hannah Bast, Daniel Delling, Andrew~V. Goldberg, Matthias Müller-Hannemann,
  Thomas Pajor, Peter Sanders, Dorothea Wagner, and Renato~F. Werneck.
\newblock Route planning in transportation networks.
\newblock In Lasse Kliemann and Peter Sanders, editors, {\em Algorithm
  Engineering: Selected Results and Surveys}, volume 9220 of {\em Lecture Notes
  in Computer Science}, pages 19--80. Springer, 2016.
\newblock \href {https://doi.org/10.1007/978-3-319-49487-6_2}
  {\path{doi:10.1007/978-3-319-49487-6_2}}.

\bibitem{BauerCRW16}
Reinhard Bauer, Tobias Columbus, Ignaz Rutter, and Dorothea Wagner.
\newblock Search-space size in contraction hierarchies.
\newblock {\em Theoretical Computer Science}, 645:112--127, 2016.
\newblock \href {https://doi.org/10.1016/j.tcs.2016.07.003}
  {\path{doi:10.1016/j.tcs.2016.07.003}}.

\bibitem{BaumDPW13}
Moritz Baum, Julian Dibbelt, Thomas Pajor, and Dorothea Wagner.
\newblock Energy-optimal routes for electric vehicles.
\newblock In Craig~A. Knoblock, Peer Kröger, John Krumm, Markus Schneider, and
  Peter Widmayer, editors, {\em Proceedings of the 21st {ACM} {SIGSPATIAL}
  International Conference on Advances in Geographic Information Systems
  ({SIGSPATIAL}'13)}, pages 54--63. {ACM} Press, 2013.
\newblock \href {https://doi.org/10.1145/2525314.2525361}
  {\path{doi:10.1145/2525314.2525361}}.

\bibitem{Bentley75}
Jon~Louis Bentley.
\newblock Multidimensional binary search trees used for associative searching.
\newblock {\em Communications of the ACM}, 18(9):509--517, 1975.
\newblock \href {https://doi.org/10.1145/361002.361007}
  {\path{doi:10.1145/361002.361007}}.

\bibitem{BuchholdSW19a}
Valentin Buchhold, Peter Sanders, and Dorothea Wagner.
\newblock Efficient calculation of microscopic travel demand data with low
  calibration effort.
\newblock In Farnoush Banaei-Kashani, Goce Trajcevski, Ralf~Hartmut Güting,
  Lars Kulik, and Shawn~D. Newsam, editors, {\em Proceedings of the 27th {ACM}
  {SIGSPATIAL} International Conference on Advances in Geographic Information
  Systems ({SIGSPATIAL}'19)}, pages 379--388. {ACM} Press, 2019.
\newblock \href {https://doi.org/10.1145/3347146.3359361}
  {\path{doi:10.1145/3347146.3359361}}.

\bibitem{BuchholdSW19b}
Valentin Buchhold, Peter Sanders, and Dorothea Wagner.
\newblock Real-time traffic assignment using engineered customizable
  contraction hierarchies.
\newblock {\em {ACM} Journal of Experimental Algorithmics},
  24(2):2.4:1--2.4:28, 2019.
\newblock \href {https://doi.org/10.1145/3362693} {\path{doi:10.1145/3362693}}.

\bibitem{BuchholdWZZ20}
Valentin Buchhold, Dorothea Wagner, Tim Zeitz, and Michael Zündorf.
\newblock Customizable contraction hierarchies with turn costs.
\newblock In Dennis Huisman and Christos~D. Zaroliagis, editors, {\em
  Proceedings of the 20th Symposium on Algorithmic Approaches for
  Transportation Modelling, Optimization, and Systems ({ATMOS}'20)}, volume~85
  of {\em {OpenAccess} Series in Informatics ({OASIcs})}, pages 9:1--9:15.
  Schloss Dagstuhl, 2020.
\newblock \href {https://doi.org/10.4230/OASIcs.ATMOS.2020.9}
  {\path{doi:10.4230/OASIcs.ATMOS.2020.9}}.

\bibitem{DellingGNW13}
Daniel Delling, Andrew~V. Goldberg, Andreas Nowatzyk, and Renato~F. Werneck.
\newblock {PHAST}: Hardware-accelerated shortest path trees.
\newblock {\em Journal of Parallel and Distributed Computing}, 73(7):940--952,
  2013.
\newblock \href {https://doi.org/10.1016/j.jpdc.2012.02.007}
  {\path{doi:10.1016/j.jpdc.2012.02.007}}.

\bibitem{DellingGPW17}
Daniel Delling, Andrew~V. Goldberg, Thomas Pajor, and Renato~F. Werneck.
\newblock Customizable route planning in road networks.
\newblock {\em Transportation Science}, 51(2):566--591, 2017.
\newblock \href {https://doi.org/10.1287/trsc.2014.0579}
  {\path{doi:10.1287/trsc.2014.0579}}.

\bibitem{DellingGW11}
Daniel Delling, Andrew~V. Goldberg, and Renato~F. Werneck.
\newblock Faster batched shortest paths in road networks.
\newblock In Alberto Caprara and Spyros~C. Kontogiannis, editors, {\em
  Proceedings of the 11th Workshop on Algorithmic Approaches for Transportation
  Modeling, Optimization, and Systems ({ATMOS}'11)}, volume~20 of {\em
  {OpenAccess} Series in Informatics ({OASIcs})}, pages 52--63. Schloss
  Dagstuhl, 2011.
\newblock \href {https://doi.org/10.4230/OASIcs.ATMOS.2011.52}
  {\path{doi:10.4230/OASIcs.ATMOS.2011.52}}.

\bibitem{DellingW15}
Daniel Delling and Renato~F. Werneck.
\newblock Customizable point-of-interest queries in road networks.
\newblock {\em {IEEE} Transactions on Knowledge and Data Engineering},
  27(3):686--698, 2015.
\newblock \href {https://doi.org/10.1109/TKDE.2014.2345386}
  {\path{doi:10.1109/TKDE.2014.2345386}}.

\bibitem{DemetrescuGJ09}
Camil Demetrescu, Andrew~V. Goldberg, and David~S. Johnson, editors.
\newblock {\em The Shortest Path Problem: Ninth {DIMACS} Implementation
  Challenge}, volume~74 of {\em {DIMACS} Book}.
\newblock American Mathematical Society, 2009.

\bibitem{DibbeltSW16}
Julian Dibbelt, Ben Strasser, and Dorothea Wagner.
\newblock Customizable contraction hierarchies.
\newblock {\em {ACM} Journal of Experimental Algorithmics},
  21(1):1.5:1--1.5:49, 2016.
\newblock \href {https://doi.org/10.1145/2886843} {\path{doi:10.1145/2886843}}.

\bibitem{Dijkstra59}
Edsger~W. Dijkstra.
\newblock A note on two problems in connexion with graphs.
\newblock {\em Numerische Mathematik}, 1:269--271, 1959.

\bibitem{EfentakisP14}
Alexandros Efentakis and Dieter Pfoser.
\newblock {GRASP}. {E}xtending graph separators for the single-source
  shortest-path problem.
\newblock In Andreas~S. Schulz and Dorothea Wagner, editors, {\em Proceedings
  of the 22th Annual European Symposium on Algorithms ({ESA}'14)}, volume 8737
  of {\em Lecture Notes in Computer Science}, pages 358--370. Springer, 2014.
\newblock \href {https://doi.org/10.1007/978-3-662-44777-2_30}
  {\path{doi:10.1007/978-3-662-44777-2_30}}.

\bibitem{EfentakisPV15}
Alexandros Efentakis, Dieter Pfoser, and Yannis Vassiliou.
\newblock {SALT}. {A} unified framework for all shortest-path query variants on
  road networks.
\newblock In {\em Proceedings of the 14th International Symposium on
  Experimental Algorithms ({SEA}'15)}, volume 9125 of {\em Lecture Notes in
  Computer Science}, pages 298--311. Springer, 2015.
\newblock \href {https://doi.org/10.1007/978-3-319-20086-6_23}
  {\path{doi:10.1007/978-3-319-20086-6_23}}.

\bibitem{EisnerFHSS11}
Jochen Eisner, Stefan Funke, Andre Herbst, Andreas Spillner, and Sabine
  Storandt.
\newblock Algorithms for matching and predicting trajectories.
\newblock In Matthias Müller-Hannemann and Renato~F. Werneck, editors, {\em
  Proceedings of the 13th Workshop on Algorithm Engineering and Experiments
  ({ALENEX}'11)}, pages 84--95. {SIAM}, 2011.
\newblock \href {https://doi.org/10.1137/1.9781611972917.9}
  {\path{doi:10.1137/1.9781611972917.9}}.

\bibitem{FriedmanBF77}
Jerome~H. Friedman, Jon~Louis Bentley, and Raphael~A. Finkel.
\newblock An algorithm for finding best matches in logarithmic expected time.
\newblock {\em ACM Transactions on Mathematical Software}, 3(3):209--226, 1977.
\newblock \href {https://doi.org/10.1145/355744.355745}
  {\path{doi:10.1145/355744.355745}}.

\bibitem{Geisberger11}
Robert Geisberger.
\newblock {\em Advanced Route Planning in Transportation Networks}.
\newblock PhD thesis, Karlsruhe Institute of Technology, 2011.
\newblock \href {https://doi.org/10.5445/IR/1000021997}
  {\path{doi:10.5445/IR/1000021997}}.

\bibitem{GeisbergerSSV12}
Robert Geisberger, Peter Sanders, Dominik Schultes, and Christian Vetter.
\newblock Exact routing in large road networks using contraction hierarchies.
\newblock {\em Transportation Science}, 46(3):388--404, 2012.
\newblock \href {https://doi.org/10.1287/trsc.1110.0401}
  {\path{doi:10.1287/trsc.1110.0401}}.

\bibitem{George73}
Alan George.
\newblock Nested dissection of a regular finite element mesh.
\newblock {\em SIAM Journal on Numerical Analysis}, 10(2):345--363, 1973.
\newblock \href {https://doi.org/10.1137/0710032} {\path{doi:10.1137/0710032}}.

\bibitem{GottesburenHUW19}
Lars Gottesbüren, Michael Hamann, Tim~Niklas Uhl, and Dorothea Wagner.
\newblock Faster and better nested dissection orders for customizable
  contraction hierarchies.
\newblock {\em Algorithms}, 12(9):1--20, 2019.
\newblock \href {https://doi.org/10.3390/a12090196}
  {\path{doi:10.3390/a12090196}}.

\bibitem{Guttman84}
Antonin Guttman.
\newblock R-trees: A dynamic index structure for spatial searching.
\newblock {\em {ACM} {SIGMOD} Record}, 14(2):47--57, 1984.
\newblock \href {https://doi.org/10.1145/602259.602266}
  {\path{doi:10.1145/602259.602266}}.

\bibitem{HamannS18}
Michael Hamann and Ben Strasser.
\newblock Graph bisection with pareto optimization.
\newblock {\em {ACM} Journal of Experimental Algorithmics},
  23(1):1.2:1--1.2:34, 2018.
\newblock \href {https://doi.org/10.1145/3173045} {\path{doi:10.1145/3173045}}.

\bibitem{Johnson75}
Donald~B. Johnson.
\newblock Priority queues with update and finding minimum spanning trees.
\newblock {\em Information Processing Letters}, 4(3):53--57, 1975.
\newblock \href {https://doi.org/10.1016/0020-0190(75)90001-0}
  {\path{doi:10.1016/0020-0190(75)90001-0}}.

\bibitem{KnoppSSSW07}
Sebastian Knopp, Peter Sanders, Dominik Schultes, Frank Schulz, and Dorothea
  Wagner.
\newblock Computing many-to-many shortest paths using highway hierarchies.
\newblock In {\em Proceedings of the 9th Workshop on Algorithm Engineering and
  Experiments ({ALENEX}'07)}, pages 36--45. {SIAM}, 2007.
\newblock \href {https://doi.org/10.1137/1.9781611972870.4}
  {\path{doi:10.1137/1.9781611972870.4}}.

\bibitem{KobitzschRS13}
Moritz Kobitzsch, Marcel Radermacher, and Dennis Schieferdecker.
\newblock Evolution and evaluation of the penalty method for alternative
  graphs.
\newblock In Daniele Frigioni and Sebastian Stiller, editors, {\em Proceedings
  of the 13th Workshop on Algorithmic Approaches for Transportation Modeling,
  Optimization, and Systems ({ATMOS}'13)}, volume~33 of {\em {OpenAccess}
  Series in Informatics ({OASIcs})}, pages 94--107. Schloss Dagstuhl, 2013.
\newblock \href {https://doi.org/10.4230/OASIcs.ATMOS.2013.94}
  {\path{doi:10.4230/OASIcs.ATMOS.2013.94}}.

\bibitem{LeeLZ09}
Ken C.~K. Lee, Wang-Chien Lee, and Baihua Zheng.
\newblock Fast object search on road networks.
\newblock In Martin~L. Kersten, Boris Novikov, Jens Teubner, Vladimir Polutin,
  and Stefan Manegold, editors, {\em Proceedings of the 12th International
  Conference on Extending Database Technology ({EDBT}'09)}, pages 1018--1029.
  {ACM} Press, 2009.
\newblock \href {https://doi.org/10.1145/1516360.1516476}
  {\path{doi:10.1145/1516360.1516476}}.

\bibitem{LeeLZT12}
Ken C.~K. Lee, Wang-Chien Lee, Baihua Zheng, and Yuan Tian.
\newblock {ROAD}: A new spatial object search framework for road networks.
\newblock {\em {IEEE} Transactions on Knowledge and Data Engineering},
  24(3):547--560, 2012.
\newblock \href {https://doi.org/10.1109/TKDE.2010.243}
  {\path{doi:10.1109/TKDE.2010.243}}.

\bibitem{PapadiasZMT03}
Dimitris Papadias, Jun Zhang, Nikos Mamoulis, and Yufei Tao.
\newblock Query processing in spatial network databases.
\newblock In Johann-Christoph Freytag, Peter~C. Lockemann, Serge Abiteboul,
  Michael~J. Carey, Patricia~G. Selinger, and Andreas Heuer, editors, {\em
  Proceedings of the 29th International Conference on Very Large Data Bases
  ({VLDB}'03)}, pages 802--813. Morgan Kaufmann, 2003.

\bibitem{SametSA08}
Hanan Samet, Jagan Sankaranarayanan, and Houman Alborzi.
\newblock Scalable network distance browsing in spatial databases.
\newblock In Jason Tsong-Li Wang, editor, {\em Proceedings of the 27th {ACM}
  {SIGMOD} International Conference on Management of Data ({SIGMOD}'08)}, pages
  43--54. {ACM} Press, 2008.
\newblock \href {https://doi.org/10.1145/1376616.1376623}
  {\path{doi:10.1145/1376616.1376623}}.

\bibitem{SankaranarayananAS05}
Jagan Sankaranarayanan, Houman Alborzi, and Hanan Samet.
\newblock Efficient query processing on spatial networks.
\newblock In Cyrus Shahabi and Omar Boucelma, editors, {\em Proceedings of the
  13th {ACM} International Workshop on Geographic Information Systems
  ({GIS}'05)}, pages 200--209. {ACM} Press, 2005.
\newblock \href {https://doi.org/10.1145/1097064.1097093}
  {\path{doi:10.1145/1097064.1097093}}.

\bibitem{SchildS15}
Aaron Schild and Christian Sommer.
\newblock On balanced separators in road networks.
\newblock In Evripidis Bampis, editor, {\em Proceedings of the 14th
  International Symposium on Experimental Algorithms ({SEA}'15)}, volume 9125
  of {\em Lecture Notes in Computer Science}, pages 286--297. Springer, 2015.
\newblock \href {https://doi.org/10.1007/978-3-319-20086-6_22}
  {\path{doi:10.1007/978-3-319-20086-6_22}}.

\bibitem{SiminiGMB12}
Filippo Simini, Marta~C. González, Amos Maritan, and Albert-László
  Barabási.
\newblock A universal model for mobility and migration patterns.
\newblock {\em Nature}, 484(7392):96--100, 2012.
\newblock \href {https://doi.org/10.1038/nature10856}
  {\path{doi:10.1038/nature10856}}.

\bibitem{SiminiMN13}
Filippo Simini, Amos Maritan, and Zoltán Néda.
\newblock Human mobility in a continuum approach.
\newblock {\em {PLOS ONE}}, 8(3):1--8, 2013.
\newblock \href {https://doi.org/10.1371/journal.pone.0060069}
  {\path{doi:10.1371/journal.pone.0060069}}.

\bibitem{ZhongLTZ13}
Ruicheng Zhong, Guoliang Li, Kian-Lee Tan, and Lizhu Zhou.
\newblock G-tree: An efficient index for {KNN} search on road networks.
\newblock In Qi~He, Arun Iyengar, Wolfgang Nejdl, Jian Pei, and Rajeev Rastogi,
  editors, {\em Proceedings of the 22nd ACM International Conference on
  Information and Knowledge Management ({CIKM}'13)}, pages 39--48. {ACM} Press,
  2013.
\newblock \href {https://doi.org/10.1145/2505515.2505749}
  {\path{doi:10.1145/2505515.2505749}}.

\bibitem{ZhongLTZG15}
Ruicheng Zhong, Guoliang Li, Kian-Lee Tan, Lizhu Zhou, and Zhiguo Gong.
\newblock G-tree: An efficient and scalable index for spatial search on road
  networks.
\newblock {\em {IEEE} Transactions on Knowledge and Data Engineering},
  27(8):2175--2189, 2015.
\newblock \href {https://doi.org/10.1109/TKDE.2015.2399306}
  {\path{doi:10.1109/TKDE.2015.2399306}}.

\end{thebibliography}

\end{document}